\def\BibTeX{{\rm B\kern-.05em{\sc i\kern-.025em b}\kern-.08em
    T\kern-.1667em\lower.7ex\hbox{E}\kern-.125emX}}
\definecolor{orange}{rgb}{1, .36, .08}
\newtheorem{theorem}{\textbf{Theorem}}
\begin{document}

\title{
{TriLock: 
IC Protection with Tunable Corruptibility and Resilience to SAT and Removal Attacks}
\thanks{$^1$Yuke Zhang and Yinghua Hu contributed equally to this work. 
This work is based on research sponsored by the U.S. Government. The views and conclusions contained herein are those of the authors and should not be interpreted as necessarily representing the official policies or endorsements of the U.S. Government.}
}

\author{
\\[-3ex]
Yuke Zhang$^1$, Yinghua Hu$^1$, Pierluigi Nuzzo, and Peter A. Beerel\\ 
\small Department of Electrical and Computer Engineering, University of Southern California, Los Angeles, CA, USA \\ \{yukezhan, yinghuah, nuzzo, pabeerel\}@usc.edu\\[-2ex]
}

\maketitle

\begin{abstract}
Sequential logic locking has been studied over the last decade as a method to protect sequential circuits from 
reverse engineering. However, most of the existing sequential logic locking techniques are threatened by increasingly more sophisticated SAT-based attacks, efficiently using input queries to a SAT solver to rule out incorrect keys, as well as removal attacks based on structural analysis. In this paper, we propose $\mathtt{TriLock}$, a sequential logic locking method that simultaneously addresses these vulnerabilities. $\mathtt{TriLock}$ can achieve high, tunable functional corruptibility while still guaranteeing  exponential queries to the SAT solver in a SAT-based attack. Further, it adopts a state re-encoding method to 
obscure the boundary between the original state registers and those inserted by the locking method, thus making it more difficult to detect and remove the locking-related components.
\end{abstract}

\begin{IEEEkeywords}
Sequential Logic Locking, SAT-Based Attacks, Hardware Security
\end{IEEEkeywords}

\section{Introduction}\label{sec:intro}

The decentralization of the integrated circuit (IC) supply chain over the past few decades has increasingly raised concerns about potential threats, such as intellectual property (IP) piracy and hardware Trojan insertion~\cite{tehranipoor2010survey}.
One of the most investigated IC protection schemes against these threats is logic encryption (or locking)~\cite{chakraborty2009harpoon,rajendran2013fault,yasin2016sarlock,xie2018anti,hu2020sanscrypt_extended,pilato2021assure,chowdhury2021enhancing}, which 
adds programmability to the design at the gate or register-transfer level (RTL), so that the intended function is hidden from unauthorized users, and can only be accessed by a legal user by appropriately configuring the locked circuit. 

Early logic locking methods have mostly focused on modifying the combinational portion of a circuit~\cite{rajendran2013fault,yasin2016sarlock}. 
On the other hand, \emph{sequential logic locking}, the focus of this paper, usually involves creating new states in the finite state machine representing the original circuit and modifying its transitions~\cite{chakraborty2009harpoon,desai2013interlocking, dofe2018novel,kasarabada2019deep,roshanisefat2020dfssd,hu2020sanscrypt_extended,rezaei2021sequential}. 
%
The correct functionality is typically retrieved by either providing a key sequence, i.e., a dynamic sequence of key patterns, via the primary input ports during the first few clock cycles~\cite{chakraborty2009harpoon,desai2013interlocking} or by setting a set of key ports to fixed values throughout the circuit operation time~\cite{kasarabada2019deep,roshanisefat2020dfssd,rezaei2021sequential}. Sequential locking shows the promise of significantly increasing the attack effort at reasonable cost by judiciously expanding a circuit's state space. 
Yet, major threats to existing schemes have been posed by increasingly more sophisticated SAT-based attacks, efficiently using queries to a Boolean satisfiability (SAT) solver to rule  out incorrect keys, as  well as removal attacks that can exploit structural circuit signatures. 

SAT-based attacks~\cite{el2017reverse,shamsi2019kc2,hu2020sanscrypt_extended}, leveraging circuit unrolling and model checking, have shown to be successful against the first versions of sequential locking~\cite{chakraborty2009harpoon}, in that they can effectively exploit the early occurrence of output errors to dramatically decrease the number of SAT queries and accelerate the key search. This vulnerability has called for methods that can  intentionally postpone the first occurrence of the output errors~\cite{desai2013interlocking,dofe2018novel,kasarabada2019deep,rezaei2021sequential}.
However, SAT-based attacks can still be accelerated by leveraging functional corruptibility to help estimate the required circuit unrolling depth and further reduce the number of SAT queries~\cite{hu2021fun}. Moreover, SAT-resilient methods tend to exhibit poor error rates, usually captured in terms of \emph{functional corruptibility}, hence lack enough protection -- a trade-off that is extensively documented in the context of combinational locking~\cite{hu2019models,hu2021risk}.
Finally, the net boundary between the locking-related components and the rest of the circuit makes them vulnerable to removal attacks~\cite{meade2017revisit}, possibly boosted by 
machine learning-assisted netlist analysis tools~\cite{meade2016gate,geist2020relic}. \emph{A robust sequential locking scheme that can offer quantifiable protection and resilience to SAT-based and removal attacks is still elusive.}

This paper proposes $\mathtt{TriLock}$, 
an IC protection scheme that leverages the temporal dimension of sequential locking to break the well-known trade-off between SAT-attack resilience and functional corruptibility of combinational locking and simultaneously address all of the above challenges. 
Our contributions include: 
\begin{itemize}
    \item A cost-effective logic locking method that can exponentially increase the number of SAT queries required for a successful SAT-based attack. 
    \item An error-generation mechanism that can  strategically increase the output error rate to achieve a desired functional corruptibility without compromising SAT-attack resilience. 
    \item A state re-encoding technique that can significantly blur the boundary between the original circuit and the logic added by the locking scheme.
\end{itemize}
To the best of our knowledge, $\mathtt{TriLock}$ is the first sequential locking technique that simultaneously tackles all the above security objectives. We demonstrate its effectiveness via  security analysis and empirical validation on  ISCAS'89~\cite{brglez1989combinational} and ITC'99\cite{corno2000rt} benchmarks. 


\section{Background and Related Work}\label{sec:background}

We discuss SAT-based and removal attacks as well as the methods that have been proposed to counteract them. 

\subsection{Preliminaries}
 
In the following, we just use the term circuit to refer to a sequential circuit. 
\begin{figure}[t]
    \centering{
    \subfigure[]{
    \includegraphics[width=0.21\columnwidth]{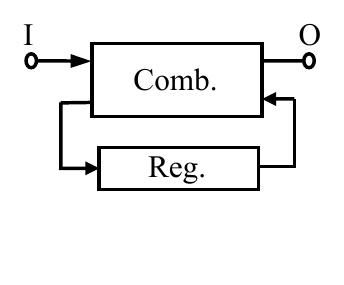}\label{fig:seq_circuit}}
    \subfigure[]{
    \includegraphics[width=0.72\columnwidth]{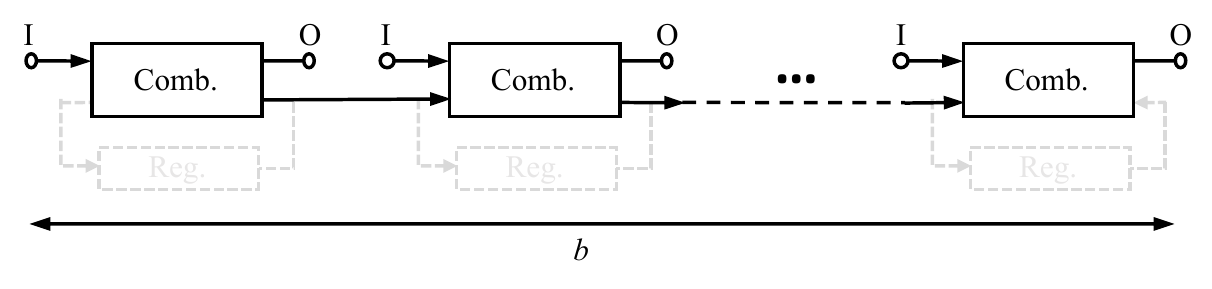}\label{fig:unrolled_circuit}}
    }
    \vspace{-3mm}
    \caption{Schematic of (a) a sequential circuit and (b) its $b$-unrolled version.}
    \label{fig:from_seq_to_comb}
    \vspace{-5mm}
\end{figure}
For a circuit $C$, shown in Fig.~\ref{fig:seq_circuit}, its $b$-unrolled version, denoted by $C_b$, is a combinational circuit, shown in Fig.~\ref{fig:unrolled_circuit}, that
represents the behavior of $C$ over the first $b$ clock cycles. We denote by $I$ and $O$ the sets of input and output ports of $C$, respectively, by $k$ and $i$ its key and input sequence, respectively, and by $\kappa$ the (cycle) length of $k$, i.e., the number of clock cycles required to provide $k$ to the circuit after reset.
For a sequence $s$, $s_n$ denotes the sub-sequence of $s$ associated with the $n$-th unrolling (clock cycle) and $s_{n\leftrightarrow m}$ the one associated with the range of unrollings from $n$ to $m$.  

Let $C^{o}_b$ be the $b$-unrolled version of the original circuit and $C^{e}_b$, with a slight abuse of notation, the $(\kappa+b)$-unrolled version of the encrypted circuit. For simplicity, we refer to $C^{e}_b$ as the  $b$-unrolled version of the encrypted circuit, by skipping the first $\kappa$ cycles used to input the key. 
Let the functions implemented by $C^{o}_b$ and $C^{e}_b$ be $f_b:\mathbb{B}^{b|I|}\rightarrow \mathbb{B}^{b|O|}$ and  $f_{b}':\mathbb{B}^{\kappa|I|}\times \mathbb{B}^{b|I|}\rightarrow \mathbb{B}^{b|O|}$, respectively. Then,  
the \emph{functional corruptibility} (FC) of a $b$-unrolled version of an encrypted circuit is defined as~\cite{hu2021fun}
\begin{equation}
    FC_b = \frac{1}{2^{(\kappa + b)|I|}} \sum_{i\in \mathbb{B}^{b|I|}} \sum_{k\in \mathbb{B}^{\kappa|I|}}\mathbbm{1}(f_b(i)\neq f_{b}'(i,k)),
    \label{eq:fc_def}
\end{equation}
\noindent where $\mathbbm{1}(\cdot)$ is the indicator function.  
$FC_b$ quantifies the proportion of errors over all input-key combinations for a $b$-unrolled encrypted circuit. 

\subsection{SAT-Based Attacks}
The idea of formulating a SAT problem to prune out wrong keys was first adopted by the SAT attack \textsc{Comb-SAT}~\cite{subramanyan2015evaluating} to combinational logic locking.  
\textsc{Comb-SAT} assumes that the attacker has access to the netlist of the locked circuit as well as unlimited access to the correct input/output pairs from the original circuit. 
An iterative key elimination process is executed by searching for distinguishing input patterns (DIPs) via SAT solving.
A DIP is an input pattern of the locked circuit for which there exist 
two different keys that lead to different outputs.
When a DIP $d$ is found, it can effectively rule out a set of wrong keys $K_d$ that are detectable by $d$, expressed as 
\begin{equation}
    K_d = \{k|f({d})\neq f'(d, k)\},
    \label{eq:key_pruned}
\end{equation}
where $f$ and $f'$ are the functions implemented by the original and the locked circuit, respectively.
Until the correct key is obtained, more DIPs are iteratively found and used to further prune out the key search space. 
A trade-off exists between SAT-attack resilience, i.e., the number of DIPs required to find the correct key, and the FC of a locked circuit~\cite{hu2019models,hu2021risk}. The larger the number of errors induced by a wrong key, the higher the likelihood that the wrong key can be detected and eliminated by a DIP. 

\textsc{Comb-SAT} cannot be directed applied to sequential circuits without scan access to their internal states. It can, however, be extended by relying on circuit unrolling to generate a $b$-unrolled version of the encrypted circuit, $C^{e}_b$, on which to apply \textsc{Comb-SAT}~\cite{el2017reverse,shamsi2019kc2,hu2020sanscrypt_extended}.
Once a key is found for $C^{e}_b$, model checking is performed to verify whether the key is also correct for $C^{e}$, beyond the first $b$ clock cycles. If this check fails, the above steps will be repeated for a larger $b$. Several sequential encryption methods~\cite{desai2013interlocking,dofe2018novel,kasarabada2019deep,rezaei2021sequential} boost SAT-attack resilience by increasing the minimum unrolling depth $b^*$ that is needed to rule out all the wrong keys. However, $b^*$ has been recently shown to be effectively predictable~\cite{hu2021fun}, thus making SAT-based attacks even more efficient. 


\subsection{Removal Attacks}

Sequential logic encryption methods may be vulnerable to  removal attacks based on structural analysis of the circuit netlist~\cite{meade2016gate,meade2017revisit,geist2020relic,roshanisefat2020dfssd}. 
Unwanted signatures may be detected in the state transition graph (STG) of the encrypted circuit, for example, when there is only one edge from the set of states added by the locking logic to the states in the original STG~\cite{chakraborty2009harpoon}.
Graph analysis methods can then be applied to the STG to recognize the boundary between the two sets of states~\cite{meade2017revisit}. State-Deflection~\cite{dofe2018novel} adds several sink state clusters in the STG to trap illegal users. 
However, because a sink cluster does not have any outgoing edge, it can be easily identified by a strongly connected component (SCC) algorithm. 

Several papers~\cite{meade2016gate,geist2020relic} view the recognition of state registers as the first step for reverse-engineering finite state machines, and propose accurate tools for this task. 
After the state registers are recognized, the original registers must be separated from the additional registers associated with the encryption logic. In this paper, we assume that all the state registers of a circuit can be successfully recognized. 
The aim of $\mathtt{TriLock}$ is to make it more difficult to separate and remove the additional registers associated with the encryption.

\section{TriLock}\label{sec:method}


\begin{figure}[t]
    \centering
    \includegraphics[width=\columnwidth]{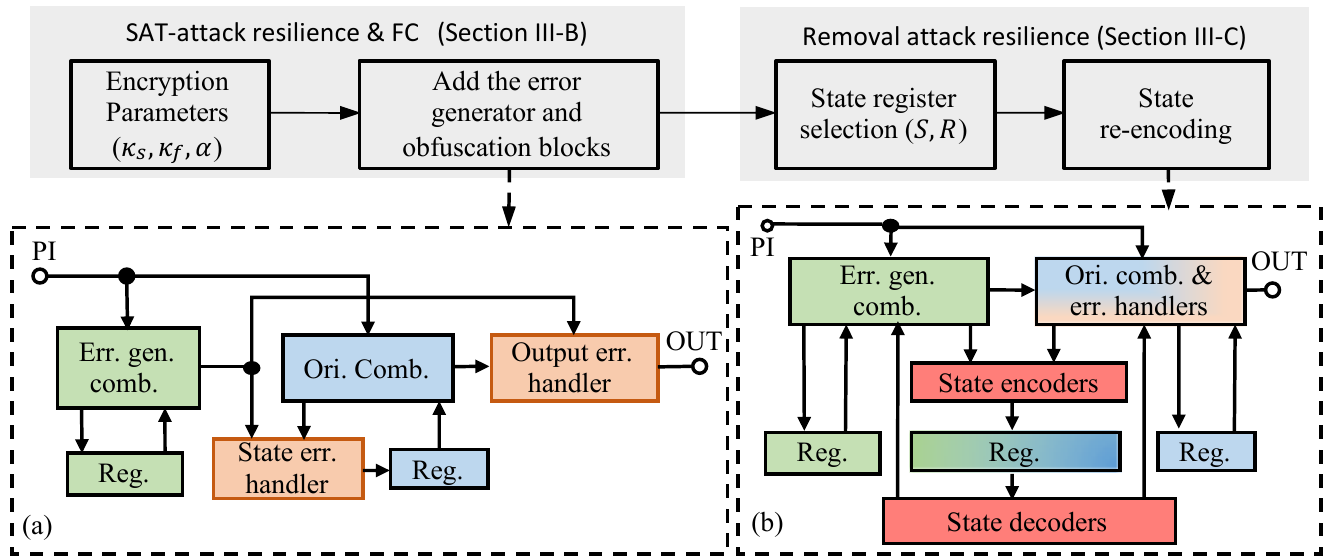}
    \vspace{-6mm}
    \caption{Overview of $\mathtt{TriLock}$.}
    \vspace{-5mm}
    \label{fig:overview}
\end{figure}

We first discuss the trade-off between SAT-attack resilience and FC and present in  Section~\ref{subsec:SAT_resilience} a naive implementation of $\mathtt{TriLock}$ that achieves high resilience at the cost of low FC. We show how to overcome the trade-off 
in Section~\ref{subsec:FC}, 
allowing independent configuration of FC without compromising SAT-attack resilience. We finally detail our 
strategy to mitigate removal attacks 
in Section~\ref{subsec:state_reencoding}. 
The encryption flow of $\mathtt{TriLock}$ 
is shown in Fig.~\ref{fig:overview}. 

\subsection{Trade-off Between SAT-Attack Resilience and FC}
\label{subsec:SAT_resilience}

Combinational logic encryption techniques, such as  SARLock~\cite{yasin2016sarlock} and Anti-SAT~\cite{xie2018anti}, adopt point functions to achieve exponential SAT-attack resilience, quantified in terms of the required number of DIPs ($n_{dip}$). 
In these methods, each DIP can only rule out a limited number of wrong keys at each iteration of the SAT attack.  
We apply a similar concept in $\mathtt{TriLock}$ to achieve exponential $n_{dip}$ in the key length $\kappa$. 
We assume that an attacker can efficiently estimate the minimum required unrolling depth $b^*$~\cite{hu2021fun} and   
perform \textsc{Comb-SAT} directly on  
$C^e_{b}$, with $b \geq b^*$. We then focus on guaranteeing an exponential $n_{dip}$ for $C^e_{b}$. 

We define an \emph{error function}, $E_b: \mathbb{B}^{b|I|} \times \mathbb{B}^{\kappa |I|} \rightarrow \{1,0\}$, for   $C^e_{b}$, 
as a function that takes as arguments a $b|I|$-bit input sequence $i$ and a $\kappa|I|$-bit key sequence $k$ and returns $1$ if and only if an error occurs at the output of $C^e_{b}$, i.e., if and only if $f_{b}'(k, i) \neq f_{b}(i)$ holds. 
A naive error function that achieves exponential $n_{dip}$ 
can then be obtained by setting $b^*=\kappa$ and by implementing a point function, as done, for example, in SARLock~\cite{yasin2016sarlock}. 
We would therefore obtain 
\begin{equation}\label{eq:sat}
  E_{b}^{N}(i, k) = \mathbbm{1} \left[ (k \neq k^* ) \wedge (k=i_{1\leftrightarrow\kappa}) \right],
\end{equation}
where $k^*$, the correct key sequence, is a fixed sequence of length $\kappa$. 
For an arbitrary wrong key $k^w$, there exists a set of input sequences $IS_{k^w}$ for which $E_{b}^{N}$ evaluates to $1$, 
expressed as follows,
\begin{equation}
\label{eq:IS_set}
    IS_{k^w}=\{i\in\mathbb{B}^{b|I|}|k^w=i_{1\leftrightarrow\kappa}\},
\end{equation}
that is, all the input sequences having $k^w$ as a prefix.
Based on the mechanism of \textsc{Comb-SAT}, any input sequence in $IS_{k^w}$ can then be selected as a DIP to rule out the wrong key $k^w$. However, an input sequence in $IS_{k^w}$ cannot detect any other wrong key, that is, 
\begin{equation}
    \forall i\in IS_{k^w},\forall k\in\mathbb{B}^{\kappa |I|}\backslash \{k^w\}, E_b(i,k)=0.
\end{equation}
Consequently, one DIP can only rule out one wrong key at a time and 
the $n_{dip}$ will equal the number of wrong keys, i.e., 
\begin{equation} 
    n_{dip}=2^{\kappa|I|}-1.
    \label{eq:est_dips}
\end{equation}
\begin{figure}[t]
    \centering{
    \vspace{-2mm}
    \subfigure[]{
    \includegraphics[width=0.22\columnwidth]{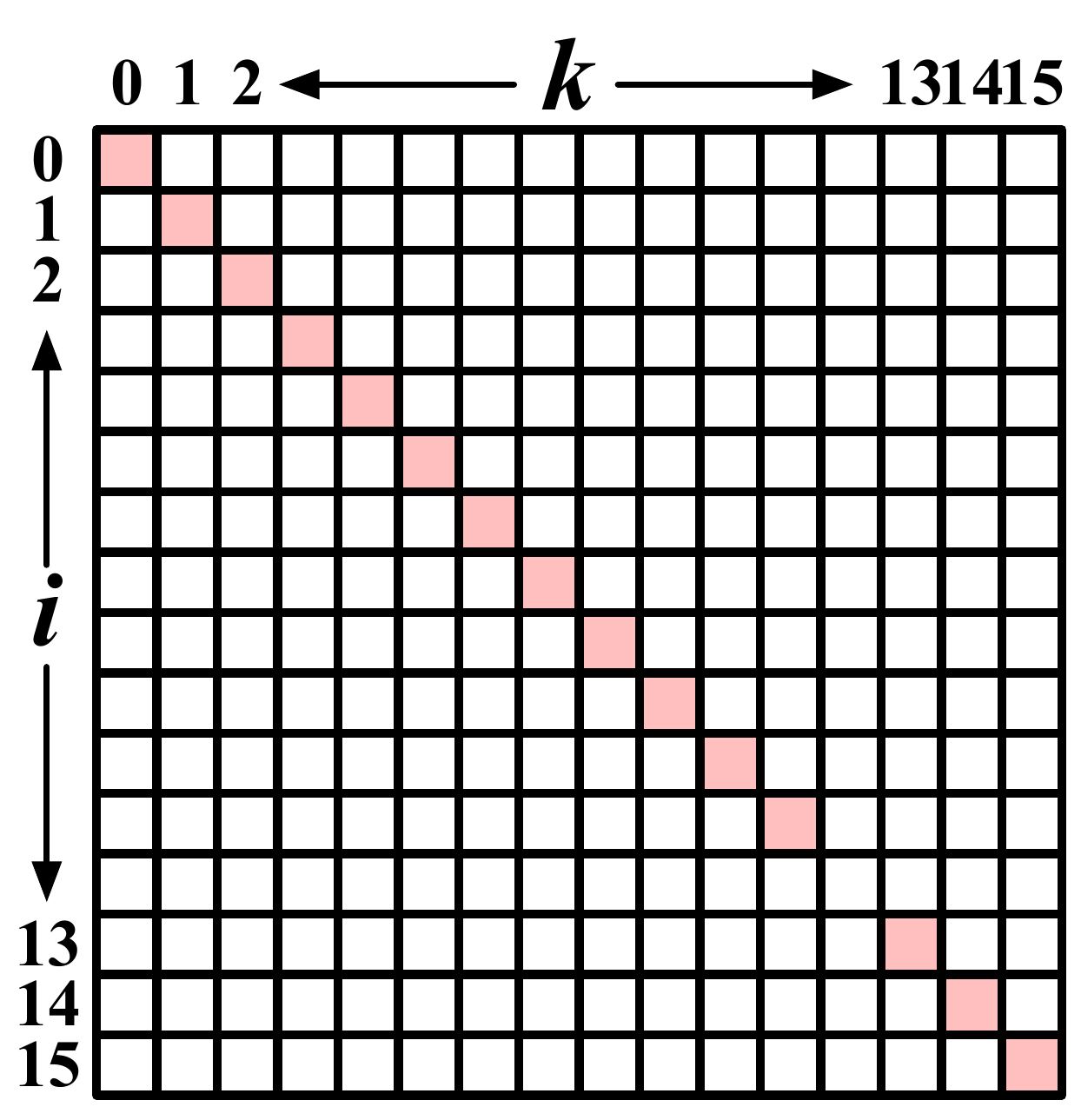}\label{fig:sarlock_table}}
    \subfigure[]{
    \includegraphics[width=0.73\columnwidth]{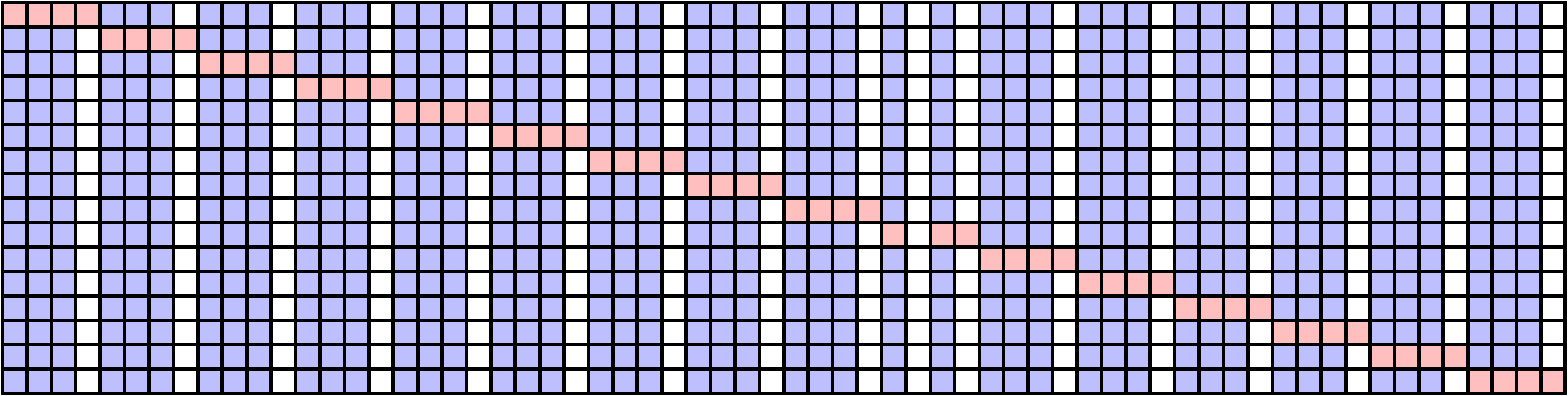}\label{fig:trilock_table}}
    }
    \vspace{-5mm}
    \caption{Error tables resulting from the application of (a) $E_{b}^{N}$ as in~\eqref{eq:sat}, with $|I|=\kappa =b^*=b=2$ and (b) $E_{b}^{SF}$ as  in~\eqref{eq:trilock}, with $|I|=\kappa_s =b^*=b=2$ and $\kappa_f=1$.}
    \label{fig:error_table}
    \vspace{-5mm}
\end{figure}
The effect of $E_{b}^{N}$ is pictorially represented by the colored error table in Fig.~\ref{fig:sarlock_table} for a 2-input circuit with $\kappa =b^*=b=2$. 
The row and the column indexes correspond to the values of the input and the key sequences, respectively. 
If $E_{b}^{N}(i,k)=1$, the corresponding square is red. 
By definition, $FC_{b}$ can be computed as 
\begin{equation}
    FC_{b}=\frac{(2^{\kappa|I|}-1)\cdot 2^{(b-\kappa)|I|}}{2^{(\kappa + b)|I|}} \approx \frac{1}{2^{\kappa|I|}} = \frac{1}{n_{dip}+1}.
    \label{eq:fc_naive}
\end{equation}
which is, unsurprisingly, low for $E_{b}^{N}$, as low as $0.06$ in the scenario of Fig.~\ref{fig:sarlock_table}.
\begin{figure}[t]
    \centering{
    \vspace{-2mm}
    \subfigure[]{
    \includegraphics[width=0.47\columnwidth]{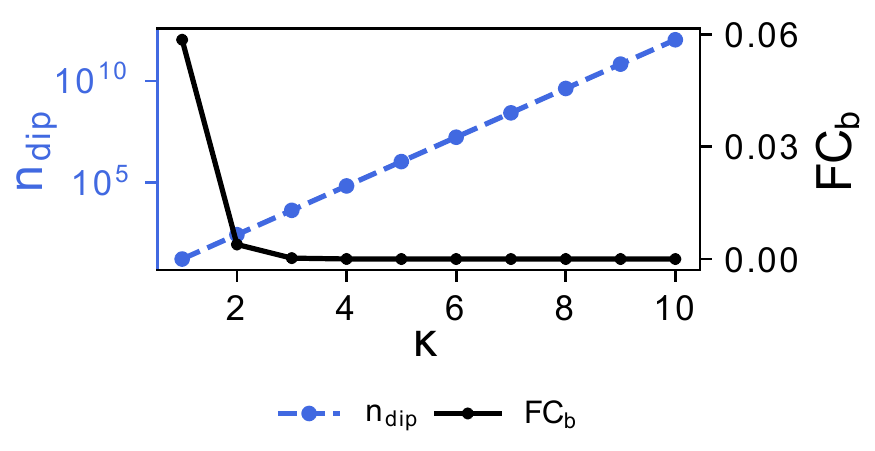}\label{fig:dips_fc_1}}
    \subfigure[]{
    \includegraphics[width=0.47\columnwidth]{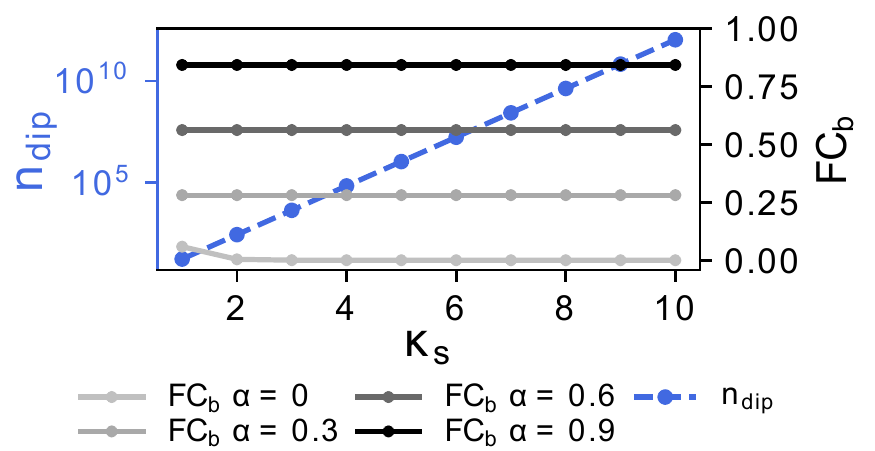}\label{fig:dips_fc_2}}
    }
    \vspace{-5mm}
    \caption{Relations between $n_{dip}$ and $FC_b$ of a 4-input circuit implemented by (a) $E_b^N$, and (b) $E_b^{SF}$ with $\kappa_f=1$.}
    \label{fig:dips_fc}
    \vspace{-5mm}
\end{figure}
In the following, we detail how this trade-off between $n_{dip}$ and $FC_b$, captured by~\eqref{eq:fc_naive} and pictorially shown in Fig.~\ref{fig:dips_fc_1} for different key cycle lengths $\kappa$ for a 4-input circuit, can be circumvented by appropriately designing the error function.

\subsection{Circumventing the SAT-Attack Resilience vs. FC Trade-Off}
\label{subsec:FC}

For better clarity, we use $\kappa_s$ to denote the key cycle length and rewrite the error function in~\eqref{eq:sat} as follows:
\begin{equation}\label{eq:sat_trilock}
  E_{b}^{S}(i, k) = \mathbbm{1} \left[ (k \neq k^* ) \wedge (k_{1\leftrightarrow\kappa_s}=i_{1\leftrightarrow\kappa_s}) \right].
\end{equation}
We can now increase FC without compromising the attack resilience achieved by $E_{b}^{S}$ 
by strategically redesigning the error function over 
a larger key cycle length $\kappa=\kappa_s + \kappa_f$, leading to an extended error table, as shown in 
Fig.~\ref{fig:trilock_table} for a 2-input circuit with $\kappa_s=b^*=b=2$ and $\kappa_f=1$. 
The red squares represent the errors defined by $E_b^S$; we denote their number by $n_b^{S}$, given by  
\begin{equation}
    \label{eq:nDIP}
    n_b^{S}=(2^{\kappa|I|}-1)\cdot 2^{(b-\kappa_s)|I|}\approx 2^{(b+\kappa_f)|I|}.
\end{equation}
Similarly to \eqref{eq:IS_set}, for any wrong key with prefix $k^w_{1\leftrightarrow\kappa_s}$, there exists a set of input sequences  $IS_{k^w}=\{i\in\mathbb{B}^{b|I|}|k^w_{1\leftrightarrow\kappa_s}=i_{1\leftrightarrow\kappa_s}\}$ that can be used as DIPs to eliminate only wrong keys with the same prefix $k^w_{1\leftrightarrow\kappa_s}$.
There are, in total, $2^{\kappa_s|I|}$ possible values for the prefix $k^w_{1\leftrightarrow\kappa_s}$ of a wrong key, so the SAT-attack resilience corresponding to $E_b^S$ is 
\begin{equation}\label{eq:sat_resilience_trilock}
    n_{dip} = 2^{\kappa_s|I|}.
\end{equation}
Besides the errors defined by $E_b^S$, we look for a set of additional input-key pairs in $C_b^e$ such that, if an error is added at each pair, it will not decrease the SAT-attack resilience achieved by $E_{b}^{S}$. For a fixed sequence $k^{**}$ of length $\kappa_f$, specified by the designer and such that $k^{**}\neq k^*_{(\kappa-\kappa_f)\leftrightarrow\kappa}$, one such set can be defined as follows: 
\begin{equation}
    \label{eq:error_free_set}
    P_b^{k^*, k^{**}} = \{(i,k)|k_{(\kappa-\kappa_f)\leftrightarrow\kappa}\neq k^{**}) \wedge(k\neq k^{*})\},
\end{equation}
where $k^*$ is the correct key sequence of length $\kappa$.
The blue squares in Fig.~\ref{fig:trilock_table} pictorially represent $P_b^{k^*, k^{**}}$ when $k^{*}=100101$ and $k^{**}=11$.
The following result states the property of $P_b^{k^*, k^{**}}$. 

\begin{theorem}\label{theorem:DIP_unchanged}
Given two sequences $k^{**}$ and $k^*$ of length $\kappa_f$ and $\kappa$, respectively, with $\kappa_f < \kappa$, let the $b$-unrolled version of the encrypted circuit $C_b^e$ implement the error function $E_b^S$ in~\eqref{eq:sat_trilock}, with $\kappa_s = \kappa -\kappa_f$. We assume that errors are inserted at all input-key pairs in $P_b^{k^*, k^{**}}$, defined as in~\eqref{eq:error_free_set}, i.e., $f_b'(i,k)\neq f_b(i),\ \forall (i,k)\in P_b^{k^*, k^{**}}$. Then, \textsc{Comb-SAT} on $C_b^e$ will require at least 
$2^{\kappa_s|I|}$ DIPs.
\end{theorem}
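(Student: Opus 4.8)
The plan is to establish the lower bound by exhibiting an explicit family of $2^{\kappa_s|I|}$ wrong keys that \textsc{Comb-SAT} cannot eliminate more than one at a time, forcing it to spend a distinct DIP on each. Recall from~\eqref{eq:key_pruned} that a DIP $d$ prunes exactly the keys on which an output error occurs, i.e.\ the set $\{k : E_b(d,k)=1\}$, where $E_b$ denotes the total error function of $C_b^e$ obtained by taking the $E_b^S$ errors together with the errors inserted at $P_b^{k^*,k^{**}}$. The whole argument reduces to controlling, for a carefully chosen key family, which entries of $E_b$ are active.

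First I would single out the keys whose $\kappa_f$-cycle suffix equals $k^{**}$,
\[
W = \{k \in \mathbb{B}^{\kappa|I|} : k_{(\kappa-\kappa_f)\leftrightarrow\kappa} = k^{**}\}.
\]
Since the suffix is pinned to $k^{**}$ while the $\kappa_s$-cycle prefix ranges freely, $W$ is in bijection with the prefix space $\mathbb{B}^{\kappa_s|I|}$; hence $|W| = 2^{\kappa_s|I|}$ and the elements of $W$ carry pairwise distinct prefixes. Using the standing hypothesis $k^{**}\neq k^*_{(\kappa-\kappa_f)\leftrightarrow\kappa}$, I would verify that $k^*\notin W$, so every element of $W$ is a wrong key. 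Moreover each $k\in W$ already errs under $E_b^S$ on any input sharing its prefix, so it is functionally distinct from $k^*$ and must therefore be ruled out before \textsc{Comb-SAT} can terminate.

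The crux is to show that a single DIP eliminates \emph{at most one} key of $W$. The key observation is that $W$ is, by construction, complementary to the suffix condition defining $P_b^{k^*,k^{**}}$ in~\eqref{eq:error_free_set}: for any $k\in W$ and any input $i$, the pair $(i,k)$ violates $k_{(\kappa-\kappa_f)\leftrightarrow\kappa}\neq k^{**}$, so $(i,k)\notin P_b^{k^*,k^{**}}$. Consequently the newly inserted errors never touch a key of $W$, and for such a key the total error function collapses to $E_b(d,k)=E_b^S(d,k)$. By~\eqref{eq:sat_trilock}, and since every $k\in W$ is wrong, this equals $1$ precisely when $d_{1\leftrightarrow\kappa_s}=k_{1\leftrightarrow\kappa_s}$, i.e.\ when $d$ shares the prefix of $k$. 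As the prefixes of the keys in $W$ are pairwise distinct, any fixed DIP $d$ matches the prefix of at most one element of $W$ and therefore prunes at most one of them.

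Combining the two facts—$W$ contains $2^{\kappa_s|I|}$ wrong keys that must all be removed, while each DIP removes at most one of them—forces \textsc{Comb-SAT} to issue at least $2^{\kappa_s|I|}$ DIPs, which is the claimed bound. I expect the only delicate point to be ruling out the possibility that the extra errors create a shortcut: this worry is dispelled by the exact complementarity between the suffix condition carving out $P_b^{k^*,k^{**}}$ and the one defining $W$, which guarantees that the keys supporting the lower bound see none of the added errors and thus retain the prefix-isolation property responsible for the resilience of $E_b^S$.
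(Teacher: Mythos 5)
Your proof is correct and follows essentially the same route as the paper's: both arguments isolate the $2^{\kappa_s|I|}$ wrong keys whose $\kappa_f$-cycle suffix equals $k^{**}$, observe that these keys are excluded from $P_b^{k^*,k^{**}}$ so their only errors come from $E_b^S$, and conclude from the prefix-matching condition in~\eqref{eq:sat_trilock} that any single DIP can eliminate at most one key in this family. Your write-up is somewhat more explicit than the paper's (e.g., verifying $k^*\notin W$ and that each such key must indeed be eliminated), but the underlying idea is identical.
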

\begin{proof}
Given a wrong key $k^w$ with $k^{**}$ as a suffix, i.e., such that 
    $(k^w\neq k^*)\wedge (k^w_{(\kappa-\kappa_f)\leftrightarrow\kappa} = k^{**})$,
by the definition of $P_b^{k^*, k^{**}}$, we have that $(i,k^w) \notin P_b,\ \forall i\in \mathbb{B}^{b|I|}$.
Let $i^w$ be the DIP capable of detecting $k^w$. Since $i^w$ cannot be in $P_b^{k^*, k^{**}}$, then it must satisfy $E_b^S({i^w}, k^w)=1$. 
Let us now assume that $k^{v}$ is another wrong key with $k^{**}$ as a suffix, and such that $k^{w} \neq k^{v}$. By~\eqref{eq:sat_trilock}, we conclude that $E_b^S(i^w, k^{v})=0$ holds. Therefore, the DIP that allows ruling out $k^{w}$ cannot exclude any other wrong key having $k^{**}$ as a suffix. 
In total, there are $2^{\kappa_s|I|}$ wrong keys with a suffix of $k^{**}$. Therefore, 
$2^{\kappa_s|I|}$ DIPs are at least required to exclude those wrong keys. 
\end{proof}

Theorem~\ref{theorem:DIP_unchanged} indicates that more errors can be added to $C_b^e$ to boost $FC_b$ without negatively affecting the SAT-attack resilience achieved with $E_b^S$ alone. Moreover, the number of DIPs is independent of $b$. 
We denote by $n_b^{ef}$ the number of error-free entries on the error table.
We can compute the maximum achievable $FC_b$ as follows: 
\begin{equation}
\small
    \begin{aligned}
    \label{eq:trilock_max_fc}
    FC_b &= 
    \frac{2^{(\kappa+b)|I|} - n_b^{ef}}{2^{(\kappa+b)|I|}} 
    = 1 \!-\! \frac{2^{\kappa_s|I|}\cdot 2^{b|I|}}{2^{(\kappa+b)|I|}}
    =1\!-\!\frac{1}{2^{\kappa_f|I|}}.
    \end{aligned}
\end{equation}
In the scenario of Fig.~\ref{fig:trilock_table}, if all the blue squares are selected as errors, $FC_b$ can be as high as $0.75$. We can select the additional errors via the following error function 
\begin{equation}
    E_b^F(i,k)=\mathbbm{1} \left[
   (i,k)\in P_b^{k^*,k^{**}}
    \wedge r(i,k)
    \right],
\end{equation}
where $r(i,k)$ modulates the proportion of 
input-key pairs in $P_b^{k^*,k^{**}}$ that are selected to place an error. In this paper, we choose 
\begin{equation}
    r(i,k)=\mathbbm{1}\left[ k_{(\kappa-\kappa_f)\leftrightarrow\kappa} \leq \alpha(2^{\kappa_f|I|}-1) \right],
\end{equation}
where $\alpha \in (0,1)$ is a design parameter used to configure the desired FC to the following value: 
\begin{equation}
    \label{eq:trilock_max_fc_tune}
    FC_b \approx \alpha\left(1-\frac{1}{2^{\kappa_f|I|}}\right).
\end{equation}
%
%
%
%
By combining $E_b^S$ and $E_b^F$, we obtain
\begin{equation}\label{eq:trilock}
    \begin{aligned}
    E_b^{SF}(i,k) =
    E_b^{S}(i,k)\vee E_b^{F}(i,k),
    \end{aligned}
\end{equation}
which is the error function adopted by $\mathtt{TriLock}$ to guarantee exponential SAT-attack resilience and independently configurable FC. As shown in Fig.~\ref{fig:dips_fc_2} for a $4$-input circuit with $\kappa_f=1$, it is indeed possible to independently tune the $FC_b$ while still keeping high SAT-attack resilience. 
Moreover, the $n_{dip}$ and $FC_b$ in~\eqref{eq:sat_resilience_trilock} and~\eqref{eq:trilock_max_fc_tune}, respectively, are independent of the unrolling depth $b$.

We implement the error function $E_b^{SF}$ with the error generator block, shown in green in Fig.~\ref{fig:overview}(a), whose output signals are passed to the state error handler and the output error handler in orange, to trigger a signal inversion on a configurable number of state registers and primary output ports, respectively.

\subsection{Enhancing Removal Attack Resilience: State Re-encoding}\label{subsec:state_reencoding}

As shown in Fig.~\ref{fig:overview}(a), we can distinguish  
the original state registers of an an encrypted circuit (in blue) from the extra state registers added by the encryption (in green). Identifying the type of registers
is an essential step toward removal attacks.  
Specifically, an attacker can leverage the SCC algorithm
in a register connection graph (RCG), where a register is represented by a node and the existence of a path between two registers is denoted by a directed edge between the  corresponding nodes. 
The output of the SCC algorithm on an RCG is one or more clusters of nodes, called SCCs. For any two nodes in the same SCC, they are reachable from each other.
We denote by O-SCC, E-SCC, and M-SCC, an SCC containing only the original registers, only the extra registers, and a mix of the two types of registers, respectively. 

When no SCC in an RCG is an M-SCC, 
the identification of the set of original or extra registers is expected to be easy,
as each SCC is already a congregation of either original or extra registers. In the best case, an attacker could expect only two SCCs, an O-SCC and an E-SCC, with all the original registers and all the extra registers, respectively, as the algorithm output.  
Such a successful clustering of the registers would be due to the insufficient 
connections between original and extra registers. In contrast, if the connections are dense, one or more M-SCCs will exist and it will be harder to classify the type of registers in those M-SCCs. 
We then propose a \emph{state re-encoding} method, 
implemented on the encrypted sequential circuit, that intentionally creates new edges between O-SCCs and E-SCCs,
resulting in more registers 
being clustered in one or several 
M-SCCs. As shown in Fig.~\ref{fig:overview}(b), the state re-encoding method selects a configurable number of registers, and inserts state encoders and decoders after adding the error generator and error handlers.
We introduce below the register selection procedure and the encoder/decoder mechanism adopted in state re-encoding. 

\textbf{State Register Selection.} 
\begin{algorithm}[t]
 \caption{\texttt{State register selection}}
 \label{algo:state_register_selection}
 \footnotesize
 \begin{algorithmic}[1]\label{alg:attack_main}
 \renewcommand{\algorithmicrequire}{\textbf{Input:}}
 \renewcommand{\algorithmicensure}{\textbf{Output:}}
\REQUIRE $C^{e}$ and $S$. 
\ENSURE  $R$.
\STATE $RCG={\rm create\_graph}(C^e); \ R=[\ ]; count=0$
\STATE $E, O, M=run\_scc(RCG)$
\WHILE{\textbf{(($E \neq \emptyset$ or $O \neq \emptyset$) and $count < S$)}}
    \IF{($E \neq \emptyset$ and $O \neq \emptyset$)}
        \STATE $SCC_1, SCC_2 = get\_largest\_scc(E, O)$
    \ELSE
        \STATE $temp = get\_non\_empty\_set(E, O)$  
        \STATE $SCC_1, SCC_2 = get\_largest\_scc(temp, M)$
    \ENDIF

    \STATE $r_1, r_2 = get\_max\_edge\_node(SCC_1, SCC_2)$
     \STATE $R.append([r_1, r_2]); \ count =count  + 1$
    \STATE $RCG=update\_graph(RCG, [r_1, r_2])$
    \STATE $E, O, M=run\_scc(RCG)$
   
\ENDWHILE
 \RETURN $R$
 \end{algorithmic} 
 \end{algorithm}
We adopt a greedy method to iteratively select and encode pairs of original and extra registers.  Algorithm~\ref{algo:state_register_selection} shows the pair selection process,
which takes as inputs an encrypted netlist $C^e$ and the desired number of register pairs $S$, and returns a list of register pairs $R$ as output. 
After creating the RCG from $C^e$ (line 1) and running the SCC algorithm (line 2), three sets, i.e., $E$, $O$, and $M$, are generated that contain all the E-SCCs, O-SCCs, and M-SCCs, respectively.
To maximize the impact of state re-encoding for a single pair of original and extra registers, we first identify the largest O-SCC and E-SCC as $SCC_1$ and $SCC_2$, respectively (line 5). 
In case there does not exist an E-SCC or O-SCC, we choose the largest M-SCC as the replacement (line 7-8).
In $SCC_1$ and $SCC_2$, we then select the nodes connected by the largest number of edges, denoted as $r_1$ and $r_2$, respectively (line 10). 
We record $(r_1,r_2)$ in $R$ (line 11) and update the RCG (line 12-13) as a result of re-encoding $(r_1,r_2)$. 
The above register pair selection process is iterated until the designer-specified number of pairs is reached or no E-SCC and O-SCC exist. 


\textbf{Encoder/Decoder Mechanism.} 
For each pair of registers $(r_1,r_2)$ in $R$,
a state encoder and a state decoder are inserted between the combinational logic and the two registers, $r_1$ and $r_2$, as shown in Fig.~\ref{fig:st_structure},
to merge the SCC containing $r_1$ ($SCC_1$) and the one containing $r_2$ ($SCC_2$). 
We denote by $enc(\cdot)$ and $dec(\cdot)$ the functions of the encoder and the decoder, respectively. 
Between the encoder and the decoder, $r_1$ and $r_2$ are replaced by a set of encoded state registers $r_{ei}$, where $1 \leq i \leq m$. 
We denote by $s_1$ ($s_1'$) the net connecting from (to) the combinational logic to (from) $r_1$.
Similar notations are used for $r_2$.

To prevent the encoder/decoder structure from affecting the circuit function, a fixed-point condition $dec(enc(a))=a$ should hold for any 2-bit sequence $a$.
Moreover, state re-encoding should achieve successful merging of the two SCCs into one M-SCC, which requires the existence of the following looped signal propagation path (abbreviated as a path): 
\begin{equation} \label{eq:state_reg_loop}
\small
   \exists x,y\in[1,m]: s_1\rightarrow \mathbf{r_{ex}} \rightarrow s_2' \rightarrow s_2\rightarrow \mathbf{r_{ey}} \rightarrow s_1' \rightarrow s_1. 
\end{equation}
When such a looped path exists, any register $r_a$ in $SCC_1$ can connect to any register $r_b$ in $SCC_2$ via the path: $ r_a\rightarrow 
s_1 \rightarrow \mathbf{r_{ex}} \rightarrow s_2' 
\rightarrow r_b$, as shown in Fig.~\ref{fig:st_scc}. Similarly, $r_b$ can also reach $r_a$ via the path: $r_b 
\rightarrow s_2\rightarrow \mathbf{r_{ey}} \rightarrow s_1' 
\rightarrow r_a$.
On the RCG, the above paths construct a bidirectional edge between $r_a$ and $r_b$, which merges $SCC_1$ and $SCC_2$ into an M-SCC.

\begin{figure}[t]
    \centering{
    \vspace{-5mm}
    \subfigure[]{
    \includegraphics[width=0.55\columnwidth]{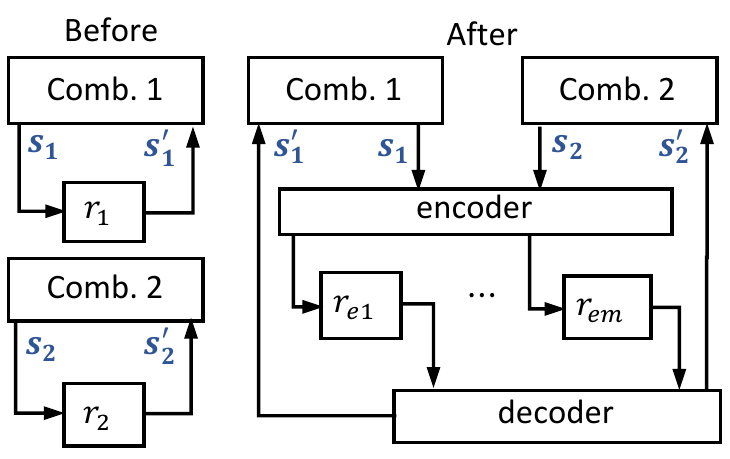}\label{fig:st_structure}}
    \subfigure[]{
    \includegraphics[width=0.36\columnwidth]{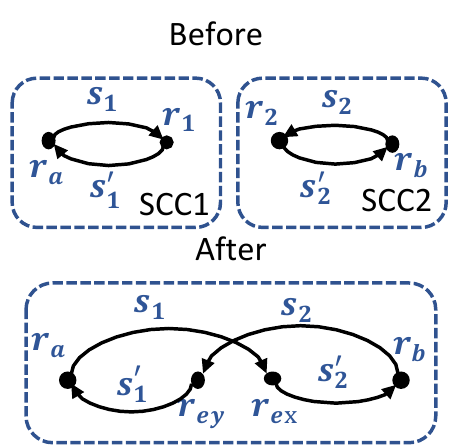}\label{fig:st_scc}}
    }
    \vspace{-3mm}
    \caption{(a) Abstract schematic and (b) RCG before and after state re-encoding.}
    \label{fig:re-encoding}
    \vspace{-5mm}
\end{figure}

In this paper, we implement the encoder with two arithmetic operations, namely, $e_1=s_1 + s_2$ and $e_2 = s_1 - s_2$, where $e_1$ and $e_2$ are the encoded states. 
The decoder also excutes two arithmetic operations, 
$s'_1 = \frac{1}{2}(e'_1+e'_2)$ and $s'_2 = \frac{1}{2}(e '_1-e'_2)$, which satisfies the fixed-point condition while creating a looped path as in \eqref{eq:state_reg_loop}. 
The associated two SCCs are, thus, merged into an M-SCC.  
To mitigate the structural signature produced by repeatedly implementing the same encoder/decoder, various $enc(\cdot)$ and $dec(\cdot)$ can be applied to different register pairs, which can be subject of  future work.

\section{Experimental Results}








We implement the encryption flow of $\mathtt{TriLock}$ in Python, using Synopsys Design Compiler and a 45nm Nangate Open Cell Library~\cite{nangate} as the synthesis tool and the target library, 
respectively. 
FC is simulated
with $800$ random inputs and keys using Synopsys VCS, while the 
SAT-attack resilience is evaluated via an implementation of 
a state-of-the-art SAT-based attack~\cite{hu2021fun} which can effectively predict the minimum required unrolling depth $b^*$. In the case of $\mathtt{TriLock}$, $b^*=\kappa_s$.
We select ten benchmark circuits from ISCAS'89~\cite{brglez1989combinational} and ITC'99~\cite{corno2000rt}, as shown in Table~\ref{table:SATresults}. All experiments are executed on an Intel(R) Xeon(R) E5-2450 2.5-GHz CPU with 126-GB memory. 

\begin{figure*}[t]
    \centering
    \includegraphics[width=\textwidth]{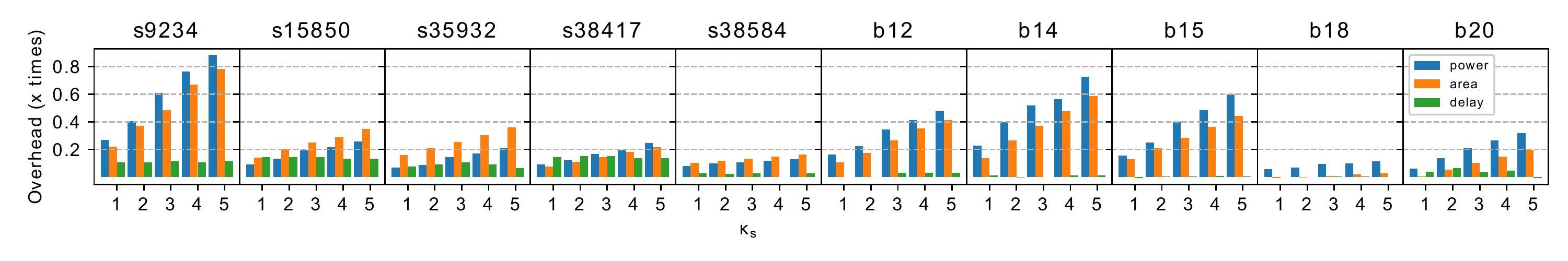}
    \vspace{-9mm}
    \caption{Area, power, and delay overhead of $\mathtt{TriLock}$ with $\kappa_f=1$, $\alpha = 0.6$, and $S=10$.}
    \label{fig:overhead}
    \vspace{-7mm}
\end{figure*}

\begin{table}[t]
\Large
\centering
\caption{SAT-attack resilience of TriLock}
\label{table:SATresults}
\setlength{\aboverulesep}{0pt}
\setlength{\belowrulesep}{0pt}
\resizebox{\columnwidth}{!}{

\small
\begin{tabular}{@{}c|cccc|cc|cc|cc}
\toprule
\multirow{2}{*}{Circuit} & \multicolumn{4}{c|}{Circuit Info.} & \multicolumn{2}{c|}{$\kappa_s = 1$} & \multicolumn{2}{c|}{$\kappa_s = 2$} & \multicolumn{2}{c}{$\kappa_s = 3$} \\ \cline{2-11}

 & PI & PO & FF & Gate & $n_{dip}$\tnote{1} & T (s) \tnote{1} & $n_{dip}$\tnote{1} & T (s) \tnote{1} & $n_{dip}$\tnote{1} & T (s) \tnote{1} \\ \hline
 
s9234 & 19 & 22 & 228 & 5597 & {\color{blue} 524288 }  & {\color{blue} 3.9e+06 } & {\color{blue} 2.7e+11 } & {\color{blue} 2.1e+12 } & {\color{blue} 1.4e+17 } & {\color{blue} 1.1e+18 } \\ \hline

s15850 & 13 & 87 & 597 & 9772 & 8192 & 105283  & {\color{blue} 6.7e+07 } & {\color{blue} 5.0e+08 } & {\color{blue} 5.5e+11 } & {\color{blue} 4.1e+12 } \\ \hline
s35932 & 35 & 320 & 1728 & 16065 & {\color{blue} 3.4e+10 }  & {\color{blue} 2.6e+11 } & {\color{blue} 1.2e+21 } & {\color{blue} 8.8e+21 } & {\color{blue} 4.1e+31 } & {\color{blue} 3.0e+32 } \\ \hline
s38417 & 28 & 106 & 1636 & 22179 & {\color{blue} 2.7e+08 }  & {\color{blue} 2.0e+09 } & {\color{blue} 7.2e+16 } & {\color{blue} 5.4e+17 } & {\color{blue} 1.9e+25 } & {\color{blue} 1.4e+26 } \\ \hline


s38584 & 11 & 278 & 1452 & 19253 & 2048 & 27394.01 & {\color{blue} 4.2e+06 } & {\color{blue} 3.1e+07 } & {\color{blue} 8.6e+09 } & {\color{blue} 6.4e+10 } \\ \hline


b12 & 5 & 6 & 121 & 1000 &  32 & 55.44 & 1024 & 1934.18  & {\color{blue} 32768 } & {\color{blue} 244449.28 } \\ \hline

b14 & 32 & 54 & 245 & 8567 & {\color{blue} 4.3e+09 }  & {\color{blue} 3.2e+10 } & {\color{blue} 1.8e+19 } & {\color{blue} 1.4e+20 } & {\color{blue} 7.9e+28 } & {\color{blue} 5.9e+29 } \\ \hline
b15 & 36 & 70 & 447 & 6931 & {\color{blue} 6.9e+10 }  & {\color{blue} 5.1e+11 } & {\color{blue} 4.7e+21 } & {\color{blue} 3.5e+22 } & {\color{blue} 3.2e+32 } & {\color{blue} 2.4e+33 } \\ \hline
b18 & 37 & 23 & 20372 & 94249 & {\color{blue} 1.4e+11 }  & {\color{blue} 1.0e+12 } & {\color{blue} 1.9e+22 } & {\color{blue} 1.4e+23 } & {\color{blue} 2.6e+33 } & {\color{blue} 1.9e+34 } \\ \hline
b20 & 32 & 22 & 490 & 17158 & {\color{blue} 4.3e+09 }  & {\color{blue} 3.2e+10 } & {\color{blue} 1.8e+19 } & {\color{blue} 1.4e+20 } & {\color{blue} 7.9e+28 } & {\color{blue} 5.9e+29 } \\


\bottomrule
\end{tabular}}

\end{table}

\noindent\textbf{SAT-Attack Resilience.} Table~\ref{table:SATresults} shows $n_{dip}$ and the runtime resulting from applying the attack on the selected benchmark circuits when $\kappa_s$ ranges from $1$ to $3$, and $\kappa_f$, $\alpha$ and $S$ are fixed to $1$, $0.6$, and $10$, respectively.
With a two-day time-out threshold, four experiments terminated successfully. The results show that the achieved SAT-attack resilience is consistent with~\eqref{eq:sat_resilience_trilock}.
For the rest of the experiments, denoted in blue, we show $n_{dip}$ as computed by \eqref{eq:sat_resilience_trilock} and extrapolate the runtime by conservatively assuming a constant ratio between the runtime and $n_{dip}$ that can be acquired from the finished experiments. 
According to Table~\ref{table:SATresults}, $76.6\%$ of the attack experiments are expected to require more than one year to finish. 

\noindent\textbf{Functional Corruptibility.}
\begin{figure}[t]
    \centering
    \includegraphics[width=\columnwidth]{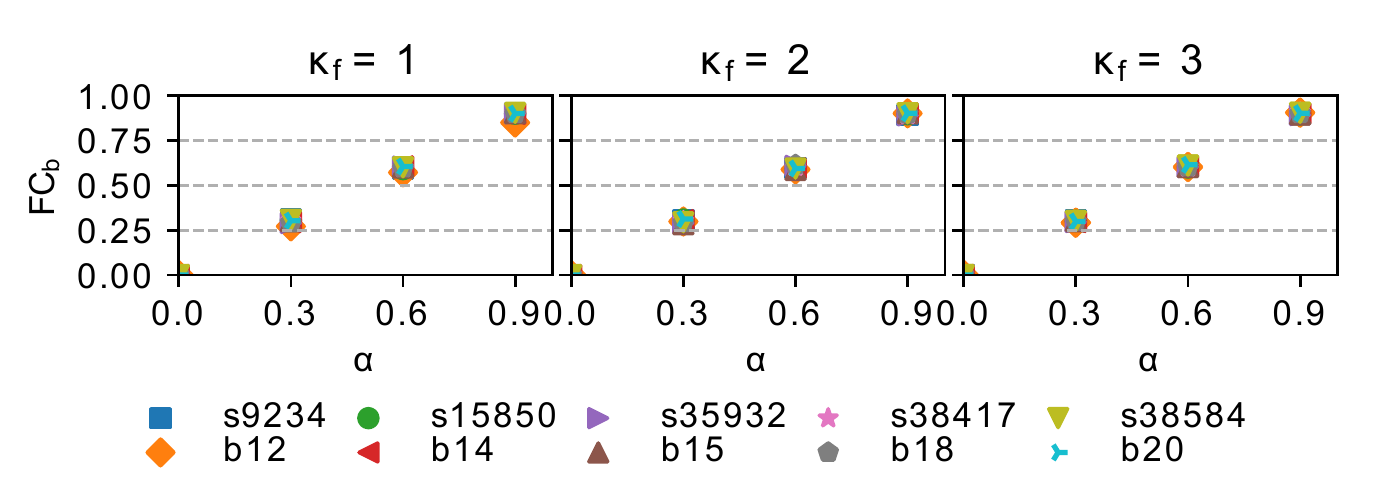}
    \vspace{-9mm}
    \caption{$FC_b$ with different $\alpha$ and $\kappa_f$}
    \label{fig:fc_results}
    \vspace{-4mm}
\end{figure}
Fig.~\ref{fig:fc_results} reports the simulated $FC_b$ for different $\alpha$ and $\kappa_f$. We set $\kappa_s=4$ to achieve high SAT-attack resilience, since $\kappa_s=3$ can already achieve high resilience for most circuits in Table~\ref{table:SATresults}. 
For each locking configuration, we plot the average of the simulated $FC_b$ for $b$ ranging from $\kappa_s$ to $\kappa_s+5$.
Our results show that $FC_b$ is close to its estimate in~\eqref{eq:trilock_max_fc_tune}, with an absolute error within $\pm0.05$, which illustrates $\mathtt{TriLock}$'s ability to configure FC with high SAT-attack resilience. 

\begin{table}[t]
\Large
\centering
\caption{Removal attack resilience of TriLock}
\label{table:SCCresults}
\setlength{\aboverulesep}{0pt}
\setlength{\belowrulesep}{0pt}
\resizebox{0.98\columnwidth}{!}{%

\scriptsize
\begin{tabular}{@{}c|cccc|cccc|cccc}
\toprule
\multirow{2}{*}{Circuit} & \multicolumn{4}{c|}{$S=0$} &  \multicolumn{4}{c|}{$S = 10$} & \multicolumn{4}{c}{$S = 30$} \\ \cline{2-13} 
 
 & $O$ & $E$ & $M$ & $P_M$  & $O$ & $E$ & $M$ & $P_M$  & $O$ & $E$ & $M$  & $P_M$    \\ \hline
s9234 & 72 & 79 & 0 & 0  &12 & 0 & 1 & 95.2 & 0 & 0 & 1 & 100  \\ \hline
s15850 & 203 & 93 & 0 & 0 & 39 & 0 & 1 & 94.0 & 14 & 0 & 1 & 97.9 \\ \hline
s35932 & 18 & 317 & 0 & 0 &  0 & 0 & 1 & 100    & 0 & 0 & 1 & 100  \\ \hline
s38417 & 889 & 198 & 0 & 0 & 36 & 0 & 1 & 97.9 & 20 & 0 & 1 & 98.9 \\ \hline
s38584 & 735 & 79 & 0 & 0 & 30 & 0 & 1 &97.5 & 0 & 0 & 1 & 100\\ \hline
b12 & 19 & 37 & 0 & 0 & 0 & 0 & 1 &100 & 0 & 0  & 1  & 100  \\ \hline
b14 & 57 & 226 & 0  & 0 & 45 & 0 & 1 & 90.4 & 24 & 0 & 1 & 95.1\\ \hline
b15 & 141 & 254 & 0& 0 & 91 & 0 & 1 & 87.1& 61 & 0 & 1 & 91.8 \\ \hline
b18 & 95 & 261 & 0 & 0 &  53 & 0 & 1 & 98.4 & 42 & 0 & 1 & 98.7\\ \hline
b20 & 43 & 226 & 0 & 0 &  31 & 0 & 1 & 95.6 & 10 & 0 & 1 & 98.6 \\ 

\bottomrule
\end{tabular}}
\vspace{-4mm}
\end{table}

\noindent\textbf{Removal Attack Resilience.} 
For each benchmark circuit, we perform state re-encoding with $S=10$ and $S=30$. In addition, we generate a reference design with no state re-encoding, i.e., $S=0$.
Table~\ref{table:SCCresults} shows the results of the SCC algorithm. 
In addition to the number of different types of SCCs, denoted by $O$, $E$, and $M$, we show
the percentage of registers that are in M-SCCs, which is denoted by $P_M$. 
On average, the numbers of O-SCCs and E-SCCs are reduced by $ 71.71 \%$ and $ 100 \%$ when $10$ register pairs are selected for state re-encoding. 
The reduction becomes $ 83.80 \%$ and $ 100 \%$ when $30$ register pairs are selected. 
While state re-encoding may not eliminate the existence of O-SCCs or E-SCCs for most cases in Table~\ref{table:SCCresults}, $P_M$ being close to $100$ indicates that most of the registers are clustered in one M-SCC, which means most of the original and extra registers are densely connected. 


\noindent\textbf{Overhead.} 
We synthesize the locked netlists with $\kappa_f=1$, $\alpha = 0.6$, and $S=10$, which achieve reasonable FC and high removal attack resilience. $\kappa_s$ ranges from $1$ to $5$ to achieve different levels of SAT-attack resilience. 
The overhead of area, delay, and power (ADP) is computed as percentage increase in the area, delay, and power, respectively, incurred by the locking scheme.
We report the ADP overhead 
in Fig.~\ref{fig:overhead},  
showing
that larger circuits tend to exhibit smaller overhead. 
Six out of ten circuits present less than $40\%$ in any of the ADP dimensions. 
In three benchmark circuits, namely, s9234, b14, and b15, the power and area overhead exceed $50\%$ when $\kappa_s>3$. However, as shown in Table~\ref{table:SATresults}, these circuits 
can already achieve reasonably high SAT-attack resilience
with $\kappa_s=2$, where the overhead is less than $40\%$.
In system-on-chip scenarios, it is possible to implement $\mathtt{TriLock}$ only on the sensitive portions of the design, making the 
overhead even smaller. 




\section{Conclusions}\label{sec:conclude}

In this paper, we propose a cost-effective sequential logic locking technique, $\mathtt{TriLock}$, to achieve both high SAT-attack resilience and high functional corruptibility, which circumvents, for the first time, the trade-off between the two security concerns that exists in combinational locking. We also present a state re-encoding technique that can significantly improve the removal attack resilience of $\mathtt{TriLock}$ and, potentially, other sequential locking techniques. Future work includes investigating other attack vectors~\cite{engels2019endof}, e.g., signature analysis on the STG, to further improve the robustness of $\mathtt{TriLock}$.






\bibliography{reference_short} 
\bibliographystyle{ieeetr}

\end{document}